\documentclass[conference,9pt]{IEEEtran}

\IEEEoverridecommandlockouts
\usepackage{cite}
\usepackage{amsmath,amssymb,amsfonts}
\usepackage{algorithmic}
\usepackage{graphicx}
\usepackage{textcomp}
\usepackage{changes}
\usepackage{marvosym}

\usepackage{bbding}
\usepackage{xcolor}
\usepackage{amsmath}
\usepackage{amssymb}
\usepackage{amsthm}
    \newtheorem{theorem}{Theorem}
    \newtheorem{lemma}{Lemma}

\usepackage{dblfloatfix} 
\usepackage{amsfonts}
\usepackage{graphicx}
\usepackage{caption}
\usepackage{float}
\usepackage{graphicx}
\usepackage{caption}
\usepackage{multicol}
\usepackage{multirow}
\usepackage{diagbox}
\usepackage{subfigure}
\usepackage{algorithm}
\usepackage{algorithmic}
\usepackage{braket}
\usepackage{CJKutf8}
\usepackage{qcircuit}
\usepackage{hyperref}
    \hypersetup{
        colorlinks=true,
        linkcolor=blue,
        filecolor=magenta,      
        urlcolor=cyan,
    }
\usepackage[export]{adjustbox}
\usepackage{balance}
\usepackage{bbding}
\usepackage{pifont}
\usepackage{booktabs}
\usepackage{adjustbox} 
\usepackage{tikz}
\usepackage{changes}

\newcommand{\figref}[1]{Fig.~\ref{#1}}
\newcommand{\tabref}[1]{Tab.~\ref{#1}}
\newcommand{\eqnref}[1]{Eqn.~\ref{#1}}
\newcommand{\algref}[1]{Alg.~\ref{#1}}
\newcommand{\lemref}[1]{Lemma~\ref{#1}}
\newcommand{\secref}[1]{Section~\ref{#1}}
\newcommand{\theref}[1]{Theorem~\ref{#1}}

\usepackage{booktabs}
\def\BibTeX{{\rm B\kern-.05em{\sc i\kern-.025em b}\kern-.08em
    T\kern-.1667em\lower.7ex\hbox{E}\kern-.125emX}}
\begin{document}

\title{CNOT Oriented Synthesis for Small-Scale Boolean Functions Using Spatial Structures of Parallelotopes
}
\author{

\IEEEauthorblockN{Qiang Zheng\textsuperscript{1}, Yongzhen Xu\textsuperscript{2}, Jiaxi Zhang\textsuperscript{3,\Letter}, Zhaofeng Su\textsuperscript{1,\Letter}, Shenggen Zheng\textsuperscript{2,\Letter}}

\IEEEauthorblockA{\textit{\textsuperscript{1}School of Computer Science and Technology, University of Science and Technology of China;}}

\IEEEauthorblockA{\textit{\textsuperscript{2}Quantum Science Center of Guangdong-Hong Kong-Macao Greater Bay Area;}}

\IEEEauthorblockA{\textit{\textsuperscript{3}School of Computer Science, Peking University.}}

\IEEEauthorblockA{\textsuperscript{\Letter}Corresponding authors: zhangjiaxi@pku.edu.cn, zfsu@ustc.edu.cn, zhengshenggen@quantumsc.cn}

\vspace{-2em}
}

\maketitle

\begin{abstract}
Quantum computing has garnered significant interest for its potential to achieve exponential speedups over classical approaches. However, in the Noisy Intermediate-Scale Quantum (NISQ) era, quantum circuit scalability remains limited by gate fidelity and qubit counts, restricting physical implementations to small-scale circuits. While prior work has explored logic network structures for quantum circuit synthesis, these methods often neglect the spatial structure intrinsic to Boolean functions.
In this paper, we leverage this spatial structure, encoded by parallelotopes embedded in the hypercube defined by the Boolean function, to access a broader optimization space, enhancing synthesis efficiency and reducing circuit complexity. 
We propose the Spatial Structure-based Hypercube Reduction~(SSHR), a novel synthesis method tailored for small-scale Boolean  functions ($\leq 8$). SSHR extracts global spatial features to minimize the use of Multi-Control Toffoli (MCT) gates.
To further exploit spatial correlations, we introduce two variants: SSHR-H employs heuristic functions to accelerate synthesis runtime, while SSHR-I integrates an Integer Linear Programming (ILP) solver to maximize spatial structure utilization. Our approach outperforms existing techniques in small-scale circuit synthesis, achieving 56\% and 81\% reductions in CNOT gate counts compared to the Exclusive Sum-of-Products (ESOP) and Xor-And-Inverter Graph (XAG) methods, respectively.
\end{abstract}

\begin{IEEEkeywords}
Quantum circuit synthesis, Spatial structure, Boolean function
\end{IEEEkeywords}

\section{Introduction}\label{sec:introduction}
Quantum computing is gaining significant attention due to its potential for  exponential speedup in certain computational tasks. 
Many famous quantum algorithms have solved different problems due to their ingenious design, e.g., Grover's algorithm~\cite{grover1996fast} for database searching, Shor's algorithm~\cite{shor1999polynomial} for prime factorization, and  
Harrow-Hassidim-Lloyd~(HHL) algorithm for linear systems of equations~\cite{harrow2009quantum}.

The execution of quantum algorithms is realized through quantum circuits. 
In the NISQ era, reducing the complexity of circuits is crucial, as it helps us obtain higher experimental accuracy and execute larger-scale quantum algorithms.
Oracle, as a black-box function, often plays an important role in many famous quantum algorithms, such as Grover's algorithm~\cite{grover1996fast}, Simon's algorithm~\cite{simon1997power}. 
Therefore, the synthesis of Oracle is crucial, which determines the testing of quantum algorithms and attracts attention~\cite{bravyi20226}.

For a Boolean function Oracle \( f: \{ 0,1 \}^{n} \to \{ 0,1\} \), its quantum circuit \( O_f \) operates as described by the equation \( O_f \ket{x}\ket{y}\ket{0}^k=\ket{x}\ket{y\oplus f(x)}\ket{0}^k \) for \( x\in \{ 0,1 \}^{n} \) and \( y\in \{ 0,1\} \) with $k$ auxiliary qubits~\cite{montanaro2008quantum}. 
Since Multi-Control Toffoli (MCT) gates are a universal set of gates for classical logic and naturally satisfy reversibility, quantum circuits usually implement Boolean functions through MCT gates. 

Currently, two common methods exist for implementing quantum circuits for Boolean functions. 
Exclusive Sum of Products~(ESOP) is able to represent Boolean   functions in a more concise form by connecting multiple product terms (AND terms) through an Exclusive OR (XOR)~\cite{fazel2007esop}. CNOT gates in quantum circuits correspond directly to the XOR operation, while multi-control gates (e.g., Toffoli gates) enable efficient implementation of the product terms. ESOPs allow the result of each product term to be directly superimposed on the target bit through the accumulative property of XOR. However, since ESOP can utilize the global structure of Boolean functions, exact methods that try to find an ESOP form with a minimum number of product terms can hardly deal with large-scale Boolean variables effectively. Xor-And-Inverter Graph~(XAG) is the best quantum circuit at the current state-of-the-art synthesis technique, which is a logic network based on $\{\oplus ,\wedge,1\}$ \cite{meuli2022xor}. XAG utilizes auxiliary qubits to store intermediate results, reducing the number of Toffoli gates in quantum circuit using local optimization.
As a result, XAG performs well in large-scale circuits. In addition to this, there are many other methods for constructing special Boolean functions, such as autosymmetric~\cite{BernasconiBCCF23}, symmetric functions~\cite{zi2024shallow}, and others~\cite{patel2008optimal,chattopadhyay2014constructive}. 

Although we already have many quantum algorithms, it is very difficult to execute large-scale quantum circuits on real quantum computers due to the limitations from the number of qubits, fidelity, coherence time, and so on. For example, Fallek et al. verified the Bernstein–Vazirani algorithm with 3 qubits \cite{fallek2016transport} and Debnath et al. verified the Deutsch–Jozsa algorithm with 5 qubits \cite{debnath2016demonstration}. It can be seen that our practical applications of quantum computing are still at a small scale for quantum query algorithms.

This naturally raises the question: can we combine the strengths of both ESOP and XAG to develop a more efficient approach on small-scale circuits? We give an affirmative answer and take the first step in this direction. Specifically, we propose a method for implementing Boolean functions using quantum circuits, which outperforms ESOP and XAG. 
We extract the spatial structure of the coordinates within the hypercube represented by the Boolean function and synthesize a quantum circuit based on the structure of the parallelotopes.
Then we design Spatial
Structure-based Hypercube Reduction~(SSHR) that can store intermediate results. Since we utilize the structure of the Boolean function in the hypercube, our algorithm has access to the global structure and thus performs well on small-scale circuits.

We have listed the relevant properties of SSHR, ESOP, and XAG in \tabref{compared}.
ESOP offers a straightforward logical representation; however, its inability to store intermediate logic limits its effectiveness in quantum circuit synthesis. XAG has limited optimization on small-scale circuits due to its heuristic strategy, which is more applicable to large-scale circuits.

\begin{table}[htbp]
\caption{Comparing SSHR to ESOP and XAG. We compared them in terms of ancillary qubits, whether intermediate results are utilized, whether global structure is used, variable scale, and the type of quantum circuit gate.}
\label{compared}
\centering
\resizebox{\linewidth}{!}{
\begin{tabular}{|c|c|c|c|}
\hline
                     & ESOP                         & XAG                          & SSHR                         \\ \hline
Ancillary qubits     & \textcolor{red}{\ding{56}}   & \textcolor{green}{\ding{52}} & \textcolor{red}{\ding{56}}   \\ \hline
Intermediate results & \textcolor{red}{\ding{56}}   & \textcolor{green}{\ding{52}} & \textcolor{green}{\ding{52}} \\ \hline
Global structure   & \textcolor{green}{\ding{52}} & \textcolor{red}{\ding{56}}   & \textcolor{green}{\ding{52}} \\ \hline
Variable scale               & small                        & large                        & small                        \\ \hline
Gate type            & MCTs                         & X,CNOT,Toffoli               & MCTs                         \\ \hline
\end{tabular}
}
\end{table}

In this paper, we propose a quantum Boolean   function synthesis method~SSHR according spatial structure.
SSHR extract the global spatial structure of Boolean   functions and does not need to use auxiliary qubits to store intermediate results. In addition, we statute the synthesis problem of quantum circuits as a special set cover problem and solve it via Integer Linear Programming~(ILP) solver. This maximizes the use of the spatial structure of the Boolean   functional correlation prallelotopes, leading to better quantum circuits. Our main contributions are as follows:

\begin{itemize}
    \item We designed a Spatial
Structure-based Hypercube Reduction~(SSHR) that utilizes the global spatial structure of parallelotopes in Boolean   function and stores it without the use of auxiliary qubits. 
    
    \item We propose two algorithms, SSHR-H and SSHR-I, to effectively exploit the spatial structure. SSHR-H is a heuristic algorithm designed for faster circuit generation. SSHR-I formulates the quantum synthesis task as a Weighted Parity Set Covering Problem (WP-SCP), optimizing spatial structure utilization via ILP solver. 

    \item The experimental results demonstrate that SSHR-H reduces CNOT gates by 74.69\% vs. XAG (6-bit) and scales better than ESOP. SSHR-I reduces CNOT gates by 56\% (vs. ESOP) and 81\% (vs. XAG) for 5-bit, while both slash single-qubit gates and auxiliary qubits, proving high efficiency.
\end{itemize}
\section{Preliminaries}\label{sec:preliminaries}
This section begins with an introduction to Boolean functions and NPN equivalence classes, followed by an overview of quantum Oracle synthesis.

\subsection{Boolean function and NPN equivalence}
A Boolean function $f:\{ 0,1 \}^{n} \to \{ 0,1\}$ can be represented by a truth table, which is a bit string $f_{2^n-1}\cdots f_1 f_0$. Here, $f_x$ denotes the output of $f$ for a given input $x$. We refer to the hexadecimal value of the bit string  $f_{2^n-1}\cdots f_1 f_0$  as the ID of $f$. An input $x$ is considered a minterm if $f(x)=1$ \cite{brayton1990multilevel}. For the sake of clarity, we refer to on-set as those $x$ values that yield $f(x)=1$ and off-set as those $x$ values that yield $f(x)=0$. The satisfaction count of $f$ refers to the number of minterms for which $f$ evaluates to 1, denoted as $|f|$. 

An \( n \)-bit Boolean function can be conceptualized as an \( n \)-dimensional hypercube. The vertices of the cube correspond to different coordinates, with those in the on-set labeled red and those in the off-set labeled white. In this context, a parallelotope, as defined in reference \cite{gover2010determinants}, plays a crucial role in this paper. 
A parallelogram represents a 2-dimensional parallelotope, whereas a parallelepiped represents a 3-dimensional parallelotope. 
In \( n \)-dimensional space, the diagonals of parallelotope bodies intersect at a vertex and are bisected by that vertex. \figref{fig:hypercube1} illustrates a 2-dimensional parallelotope in 3-dimensional space, while \figref{fig:hypercube2} depicts a general 3-dimensional hyperparallelepiped. Similarly, we can extend this concept to define an \( m \)-dimensional parallelotope in \( n \)-dimension space.

\begin{figure}[htbp]
\centering
    \subfigure[A 2-dimensional parallelotope in a 3-dimension space.]{ 
        \begin{minipage}[t]{0.35\linewidth}
        \centering
        \includegraphics[width=\textwidth]{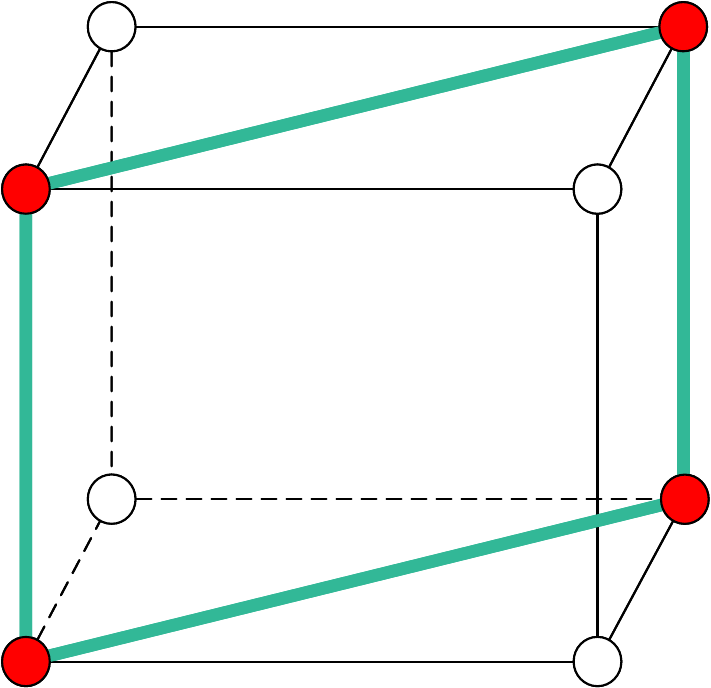}
        \end{minipage}
        \label{fig:hypercube1}
    }
    \hfill
    \subfigure[A 3-dimensional parallelotope in a 4-dimension space.]{ 
        \begin{minipage}[t]{0.35\linewidth}
        \centering
        \includegraphics[width=\textwidth]{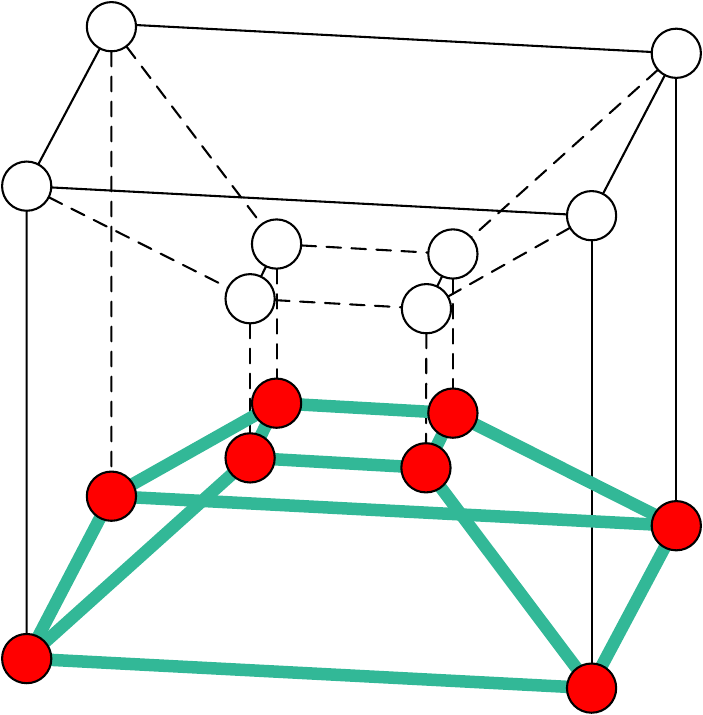}
        \end{minipage}
        \label{fig:hypercube2}
    }
\vspace{-1em}
    \caption{Examples of low-dimensional parallelotope.}
    \label{fig:example_Boolean _equivalence}
\end{figure}

Let $N=2^n$, there are $2^N$ $n$-bit Boolean functions in total. This bi-exponential growth greatly increases the number of Boolean functions, making analysis on a case-by-case basis impossible. For instance, there are 256 functions when $n=3$, and $2^{2^4}=65536$ when $n=4$. To simplify the analysis, we adopt the concept of NPN equivalence.  NPN equivalence, commonly used in the classification of Boolean functions \cite{zhang2023fast,zhang2022heuristic}, involves transformations of input negation, input permutation, and output negation. It finds significant applications in logic synthesis, technical mapping, and verification.

\figref{fig:npn} illustrates equivalence and non-equivalence in hypercubes \cite{zhang2023rethinking}. Although there are 256 3-bit functions, it is evident that there are only 14 functions in terms of NPN equivalence. For \( n=4 \), with a total of \( 2^{2^4}=65536 \) functions, NPN classification reduces them to 222 functions. NPN is also a crucial concept in quantum query complexity; if \( f \) and \( g \) are NPN equivalent, their query complexity remains the same \cite{ambainis2014exact}.

\begin{figure}[tbp]
\centering
    \subfigure[3-Majority $f_1$]{ 
        \begin{minipage}[t]{0.13\textwidth}
        \centering
        \includegraphics[width=\textwidth]{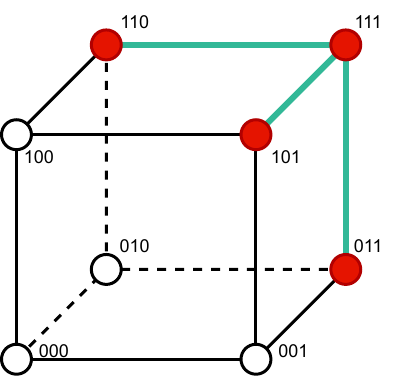}
        \end{minipage}
        \label{fig:npn1}
    }
    \hfill
    \subfigure[$f_2$]{ 
        \begin{minipage}[t]{0.13\textwidth}
        \centering
        \includegraphics[width=\textwidth]{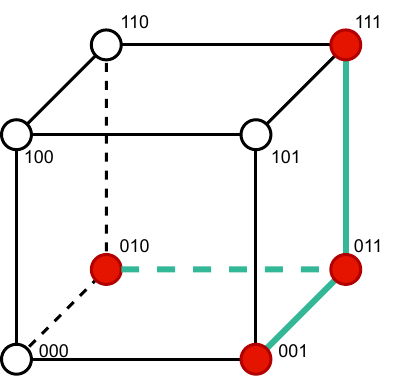}
        \end{minipage}
        \label{fig:npn2}
    }
    \hfill
    \subfigure[$f_3$]{ 
        \begin{minipage}[t]{0.13\textwidth}
        \centering
        \includegraphics[width=\textwidth]{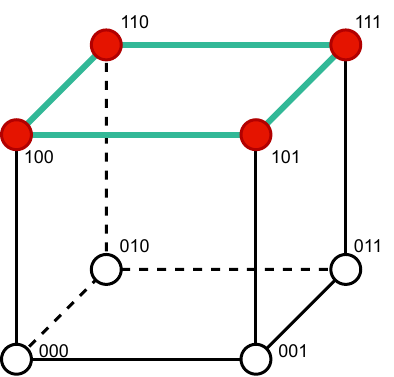}
        \end{minipage}
        \label{fig:npn3}
    }
\vspace{-1em}
    \caption{Hypercubes of three 3-variable Boolean functions. 3-majority logic $f_1$ and $f_2$ are NPN equivalent, and their induced subgraphs are isomorphic. $f_2$ and $f_3$ are not NPN equivalent, and their induced subgraphs are non-isomorphic.}
    \vspace{-\baselineskip}
    \label{fig:npn}
\end{figure}

\subsection{Quantum gate and circuit}
Quantum circuits offer a graphical framework for executing quantum computing operations through a sequence of quantum gates and measurements~\cite{nielsen2010quantum}. 
This paper employs several fundamental quantum gates, including the X-gate, H-gate, and T-gate, alongside the CNOT-gate, Toffoli-gate, and MCT-gate~\cite{miller2011elementary}. 
The matrix representations and circuit symbols of these gates are detailed in~\cite{nielsen2010quantum}.

A Quantum Oracle is typically treated as a black box, with the developers of quantum algorithms concentrating more on its functionality than on its concrete implementation.
A generalized example is illustrated in \figref{fig:oracle}.
For a Boolean function, $x$ denotes the state of the $n$-qubit of the input, and $y$ denotes the corresponding output, which includes 
$k$ auxiliary bits.
The input states and auxiliary bits remain unchanged after the Oracle, and the output value of the Boolean function is stored in $y$.
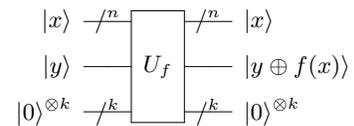
\begin{figure}[thb]
    \centering
    \[
    \Qcircuit @C=1em @R=1em {
        & \lstick{\ket{x}} & \qw  /^{n}& \multigate{2}{U_f} & \qw /^{n} & \rstick{\ket{x}} \qw \\
        & \lstick{\ket{y}} & \qw & \ghost{U_f} & \qw & \rstick{\ket{y \oplus f(x)}} \qw \\
        & \lstick{\ket{0}^{\otimes k}} & \qw /^{k}& \ghost{U_f} & \qw /^{k}& \rstick{\ket{0}^{\otimes k} \; } \qw \\
    }
    \]
    \vspace{-1em}
    \caption{Circuit demonstration of a Quantum Oracle where the input is $\ket{x}$ and the output result is stored on $\ket{y}$, along with $k$ auxiliary qubits initialized to $\ket{0}$.}
    \label{fig:oracle}
\end{figure}

Due to the current limitations in the physical implementation of quantum devices, most quantum computers achieve universality through the construction of a universal set of gates~\cite{divincenzo1995two}. 
Operations involving three or more qubits are not directly implemented on physical quantum devices; instead, they must be decomposed during compilation into a sequence of gates from the existing gate set~\cite{barenco1995elementary}.
This set of gates typically comprises single-qubit and two-qubit gates, which together provide the generality required for universal quantum computation~\cite{divincenzo1995two,plesch2011quantum}.
To facilitate comparison with other logic synthesis algorithms and ensure consistency in the metrics, we compute the cost of the $k$-MCT gate according to~\cite{BernasconiBCCF23}, which has been shown in \tabref{tab:decompose}. 
Additionally, the fidelity of two-qubit gate~(usually CNOT gate) is often an order of magnitude lower than that of single-qubit gates. 
Therefore, the number of CNOT gates plays a crucial role in determining the fidelity of the final implemented circuit.
Moreover, quantum circuits require a significant number of T-gates for fault-tolerant computation~\cite{gottesman1997stabilizer}. 
As T-gates are non-Clifford gates that demand higher precision control, their count becomes a critical factor to consider.  
Accordingly, SSHR addresses this limitation by employing a set-valued objective function to optimize for the number of T gates~(T-counts) and other relevant factors. 
This will be demonstrated in the experimental section, where we optimize the number of CNOT gates and T-counts separately.


\begin{table}[htb]
\centering
\vspace{-0.5em}
\caption{Cost of $k$-MCT gates in number of T, H and CNOT gates.}
\label{tab:decompose}
\begin{tabular}{|c|ccc|c|}
\hline
k                & T    & H     & CNOT & Ancillary qubits \\ \hline
2                & 7    & 2     & 6    & 0                \\
3                & 16   & 6     & 14   & 1                \\
$\geq4$ & 8k-8 & 8k-12 & 4k-6 & $\lceil \frac{k-2}{2} \rceil$           \\ \hline
\end{tabular}
\end{table}

\section{Spatial structure}\label{spatial}
In this section, we first define the spatial structure of Boolean functions. Building upon the concept of NPN equivalence, we then present a method for generating quantum circuit blocks using this spatial structure, along with a concrete algorithm to illustrate the approach.

For clarity and ease of use in the following sections, we first introduce several foundational concepts and operations.
\begin{itemize}
\item $INIT\_CIRCUIT(n)$: Initializes a circuit with $n$ qubits.
\item $INIT\_BLOCK\_LIST$: Initializes an empty list to store circuit blocks.
\item $ADD\_CNOT(i,j)$: Adds a CNOT gate with control position $i$ and target position $j$.
\item $ADD\_X(i)$: Adds an X-gate at position $i$.
\item $ADD\_MCT (i,\ldots, j,k)$: Adds an MCT gate with control positions $i, \ldots, j$, and $k$ as the target position.
\item $ILP\_SOLVER(A,S)$: Use the ILP solver to solve for the solution that covers the minimum term A in the set S.
\end{itemize}

The core of our algorithm lies in utilizing the structure of parallelotope forms embedded in the vector space. Let $V$ be a vector space over $\mathbb{R}^n$, and let $\alpha_1,\alpha_2,\linebreak[0] \ldots,\linebreak[0] \alpha_k$ be $k$ linearly independent vectors in $V$. A parallelotpe is then defined as the Minkowski sum of line segments, formally expressed as: 
\begin{equation}\label{eq:p}
    \{ v=\sum t_i \alpha_i \mid 0\leq t_i \leq 1\}.
\end{equation}

These vectors represent the generating vertices of the parallelotope $\mathcal{P}(\alpha_1,\alpha_2,\ldots,\alpha_k)$ \cite{gover2010determinants,sikiric2014sum}. The number of generating vectors determines the dimension of the parallelotope, while the dimension of each vector defines the dimension of the ambient space.

The preceding definition is situated in Euclidean space. We now restrict the space \(V\) to the hypercube over $n$-variable Boolean functions, where the coordinates take values in \(\{0,1\}\) and $\alpha_i\in\{0,1\}^n$. This imposes an additional constraint on the parallelotope: its basis vectors must satisfy the condition specified in \lemref{lemma1}, which serves as the foundation for the subsequent proof procedure.

\begin{lemma}\label{lemma1}
In a \(n\)-dimensional hypercube, the basis vectors that constitute a \( k \)-dimensional parallelotope must satisfy the conditions: 
\begin{equation}
\alpha_i \cdot \alpha_j = 0 \quad \forall 1 \leq i,j \leq k, \ i \neq j.
\end{equation}
\end{lemma}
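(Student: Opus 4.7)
The plan is to show that the dot product condition is equivalent to the requirement that every vertex of the parallelotope lies in the Boolean hypercube $\{0,1\}^n$. This embedding requirement is implicit in the paragraph preceding the lemma (where $V$ is restricted to the hypercube and the vertices of the parallelotope are interpreted as hypercube coordinates associated with minterms). The key observation is that for $0/1$-vectors, the inner product $\alpha_i \cdot \alpha_j$ counts the number of coordinates where both vectors have a $1$, so $\alpha_i \cdot \alpha_j = 0$ is equivalent to the supports of $\alpha_i$ and $\alpha_j$ being disjoint.

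First, I would enumerate the vertices of $\mathcal{P}(\alpha_1, \ldots, \alpha_k)$ from equation (\ref{eq:p}) by taking $t_i \in \{0,1\}$, giving the $2^k$ vertices $v_S = \sum_{i \in S} \alpha_i$ for $S \subseteq \{1,\ldots,k\}$. For the parallelotope to be embedded in the hypercube, each $v_S$ must lie in $\{0,1\}^n$; in particular, coordinate-wise addition must not produce any entry exceeding $1$.

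For the necessity direction, I would argue by contradiction: suppose $\alpha_i \cdot \alpha_j \neq 0$ for some $i \neq j$. Then there exists a coordinate $\ell$ at which both $\alpha_i$ and $\alpha_j$ take value $1$, and the vertex $v_{\{i,j\}} = \alpha_i + \alpha_j$ has value at least $2$ in coordinate $\ell$, contradicting membership in $\{0,1\}^n$. For sufficiency (and to justify the setup), I would note that pairwise disjoint supports imply every subset sum $v_S$ is still a $0/1$ vector, and that nonzero $0/1$ vectors with pairwise disjoint supports are automatically linearly independent over $\mathbb{R}$, matching the hypothesis that $\alpha_1, \ldots, \alpha_k$ form a basis of the parallelotope.

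The main obstacle is not technical but rather definitional: one must pin down exactly what is meant by a parallelotope being ``embedded in the hypercube.'' The cleanest interpretation, consistent with the geometric pictures in \figref{fig:hypercube1} and \figref{fig:hypercube2} and with the subsequent use of parallelotopes to group minterms, is that every vertex of the parallelotope must coincide with a hypercube vertex. Once this is made explicit, the proof reduces to the short support-disjointness argument above.
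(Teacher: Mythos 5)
Your proof is correct and takes essentially the same route as the paper: argue by contradiction that a nonzero inner product forces some coordinate where two basis vectors both equal $1$, so the vertex $\alpha_i+\alpha_j$ from the Minkowski sum in \eqnref{eq:p} leaves $\{0,1\}^n$. The extra sufficiency/linear-independence remarks go beyond what the paper proves but do not change the core argument.
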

\begin{proof}
Suppose that in a \(n\)-dimensional hypercube, there exists a \(k\)-dimensional parallelotope that satisfies the following condition:
\[
\alpha_i \cdot \alpha_j \neq 0 \quad \exists 1 \leq i,j \leq k, \ i \neq j. \]

Without loss of generality, let $\alpha_1=(a_1,a_2,\cdots,a_n)$ and $\alpha_2 = (b_1,b_2,\cdots,b_n)$, with $\alpha_1 \cdot \alpha_2 \neq 0$. Then there must exist some index \(i\) such that $a_i\cdot b_i \neq 0$, which implies $a_i+b_i > 1$. As a result, certain vertex coordinates generated from \eqnref{eq:p} will lie outside the \(n\)-dimensional hypercube, leading to a contradiction.
\end{proof}

According to \lemref{lemma1}, the basis vectors of the parallelotope must satisfy the following property: for any fixed coordinate index, no two vectors have the value 1 at that position simultaneously.

\begin{theorem}\label{theorem1}
For a Boolean function $f:\{ 0,1 \}^{n} \to \{ 0,1\}$, if its on-set contains $2^m$ minterms that form an m-dimensional parallelotope, then the corresponding oracle can be implemented by a quantum circuit using one $(n-m)$-MCT gate.
\end{theorem}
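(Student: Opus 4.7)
The plan is to give an explicit three-stage construction of the oracle. By \lemref{lemma1} the basis vectors $\alpha_1,\dots,\alpha_m$ have pairwise disjoint supports $S_1,\dots,S_m\subseteq\{1,\dots,n\}$, and each $\alpha_i$ is the all-ones vector on $S_i$. Writing $S=S_1\cup\dots\cup S_m$ and letting $v_0$ denote the base vertex of the parallelotope, every minterm has the form $x=v_0\oplus\bigoplus_{i:\,t_i=1}\alpha_i$ for some $(t_1,\dots,t_m)\in\{0,1\}^m$. Equivalently, $x$ is a minterm iff (i) $x_j=v_{0,j}$ for all $j\notin S$ and (ii) on each block $S_i$ the bits $(x_j)_{j\in S_i}$ either equal $(v_{0,j})_{j\in S_i}$ or equal its bitwise complement. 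These two conditions are exactly what the quantum circuit must test.

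In the first stage I would pick one representative $q_i\in S_i$ for every $i$ and, for each remaining qubit $q\in S_i\setminus\{q_i\}$, insert a CNOT with control $q_i$ and target $q$. On a parallelotope input these CNOTs rewrite $x_q$ to $x_q\oplus x_{q_i}$, which by (ii) equals the constant $v_{0,q}\oplus v_{0,q_i}$; on a non-parallelotope input the CNOTs remain reversible but do not force this constant. Combined with the $n-|S|$ unchanged bits outside $S$, the circuit now exposes exactly $(n-|S|)+(|S|-m)=n-m$ qubits that must all carry prescribed constants for $x$ to be a minterm, while the $m$ representatives $q_1,\dots,q_m$ are left untouched and encode the free choice bits $t_i$.

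In the second stage I would conjugate each of these $n-m$ qubits by an X gate wherever its prescribed constant is $0$, so that the minterm condition becomes ``all $n-m$ controls are $1$'', and then apply a single $(n-m)$-controlled MCT with these qubits as controls and the oracle output wire $\ket{y}$ as target. In the third stage I invert the X conjugations and the CNOT block to restore the input register. Correctness reduces to two checks: that the MCT fires on every parallelotope minterm (immediate from (i), (ii) and the CNOT action) and that it does not fire on any other input (the conjunction of the $n-m$ transformed-bit conditions is equivalent to (i) and (ii) holding simultaneously). The main obstacle is making explicit why the disjointness guaranteed by \lemref{lemma1} is precisely what prevents the CNOT block from coupling the $m$ representatives to one another, so that the ``don't care'' structure on the $m$ free qubits really does contribute the multiplicity $2^m$ while the single MCT still captures all $2^m$ minterms.
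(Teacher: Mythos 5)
Your construction is correct and matches the paper's proof essentially step for step: intra-block CNOT fan-out from a chosen representative qubit, X-conjugation to realize the required 0-controls, a single $(n-m)$-MCT acting on the $|S|-m$ difference bits together with the $n-|S|$ untouched bits, and uncomputation of the X and CNOT layers. The only cosmetic difference is that the paper first normalizes via NPN equivalence to consecutive all-ones blocks with the all-zeros base vertex, whereas you keep a general base vertex $v_0$ and absorb it into the X-gate pattern; the ``obstacle'' you flag at the end is already settled by the disjoint supports from \lemref{lemma1}, since no representative qubit is ever the target of a CNOT.
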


\begin{proof}
To simplify the proof process, we express the basis vectors of the parallelotope in a more straightforward and intuitive form using the NPN concept. Without loss of generality, we select \(m\) vectors as follows:
    \begin{align}
        \alpha_1 &=
                        ( \overbrace{ 1, 1, \ldots , 1}^{k_1}, \overbrace{0, 0 , \ldots , 0 }^{n-s_1} ),
                    \\
        \alpha_2 &=
                        ( \overbrace{ 0, 0, \ldots , 0}^{s_1} ,\overbrace{ 1, 1, \ldots , 1}^{k_2}, \overbrace{0, 0 , \ldots , 0 }^{n-s_2}) ,
                    \\
        \vdots \nonumber \\
        \alpha_m &=
                        ( \overbrace{ 0, 0, \ldots , 0}^{s_{m-1}},\overbrace{ 1, 1, \ldots , 1}^{k_m}, \overbrace{0, 0 , \ldots , 0 }^{n-s_m}) ,
    \end{align}
    where $s_j := \sum_{i=1}^j k_i$.

Clearly, by setting the coordinates of the first vertex as 
$(\overbrace{0,0,\ldots,0}^n)$ and applying the Minkowski sum defined in \eqnref{eq:p}, we can determine the coordinates of the remaining vertices of the parallelotope.

Since the coordinates of the other vertices $(a_0,a_1,\ldots,a_{n-1})$ are obtained through addition, these vertices must satisfy the following properties:

    \begin{enumerate}
        \item The first $k_1$ values are identical, i.e.
        \begin{equation}
            a_0\oplus a_1=a_0\oplus a_2 = \cdots = a_0\oplus a_{s1-1}.
        \end{equation}
        \item The middle $k_2$ values are identical, i.e.
        \begin{equation}
            a_{k1}\oplus a_{k1+1}=a_{k1}\oplus a_{k1+2} = \cdots = a_0\oplus a_{s_2-1}.
        \end{equation}
        \item Similar for $k_3$ to $k_{m-1}$; 
        \item The last $k_m$ values are identical, i.e.
        \begin{equation}
            \begin{aligned}
                 a_{s_{m-1}} \oplus a_{s_{m-1}+1}  
                &= a_{s_{m-1}}\oplus a_{s_{m-1}+2}  \\&=\cdots
                = a_{s_{m-1}}\oplus a_{s_{m}-1}.
            \end{aligned}
        \end{equation}
    \end{enumerate}

    Based on the aforementioned properties, we can use \algref{algorithm:1} to synthesize quantum circuits. By applying \algref{algorithm:1}, synthesizing the $2^m$ minterms, which form an $m$-dimensional parallelotope in an $n$-dimensional space, requires at most one ($n-m$)-MCT gate.

    \begin{algorithm}[htbp]
    \caption{Synthesizing $2^m$ minterms in on-set which form an $m$-dimensional parallelotope in $n$-dimensional space.}
    \label{algorithm:1}
    \begin{algorithmic}[1]
        \REQUIRE The input of bases that meet the above conditions of the m-dimensional parallelotope.
        \ENSURE The circuit block is synthesized by the input.
        
        \STATE $circuit \gets INIT\_CIRCUIT(n+1)$
        \STATE $st \gets 0$
        \STATE $list\gets \text{empty list}$
        \FOR{$j = 1$ to $m$}
            \FOR{$i = 1$ to $k_j$}
                \STATE $ADD\_CNOT(st,st+i)$
                \STATE Push $st + i$ to the back of $list$
            \ENDFOR
            \STATE $st\gets st+k_j$
        \ENDFOR
        \FOR{$i \in list$}
            \STATE $ADD\_X(i)$
        \ENDFOR
        \STATE $ADD\_MCT(list,output)$
        \FOR{$i \in list$}
            \STATE $ADD\_X(i)$
        \ENDFOR
        \RETURN $circuit$
    \end{algorithmic}
    \end{algorithm}
    
    In \algref{algorithm:1}, the first ``for" loop of the double ``for" loop is used to traverse \(m\) dimensions of the parallelotope, and the second ``for" loop is used to indicate the $k_j$ values are identical where $k_j$ is determined by the previously described and to add MCT gates to the circuit and save the control positions in list. Subsequently,  we add X gates to implement a 0-controlled signal. The next step in the circuit is to add an MCT gate based on the control qubits saved in the list, which serves to ensure that the above conditions are satisfied simultaneously and revert the 0-controlled signal at last.
\end{proof}

\begin{figure}[tbp]
    \centering
    \[
\scalebox{1.0}{
\Qcircuit @C=1.0em @R=0.2em @!R { \\
\nghost{{q}_{0} :  } & \lstick{{q}_{0} :  } & \ctrl{1} & \ctrl{2} & \qw  &\ctrl{3} & \qw    & \qw      & \ctrl{3}          & \qw & \ctrl{2} & \ctrl{1} & \qw & \qw\\
\nghost{{q}_{1} :  } & \lstick{{q}_{1} :  } & \targ    & \qw      &  \qw &\qw      & \qw    & \ctrlo{1}               & \qw & \qw &  \qw & \targ & \qw & \qw\\
\nghost{{q}_{2} :  } & \lstick{{q}_{2} :  } & \qw      & \targ    &  \qw &\qw      & \qw    & \ctrlo{1}                & \qw & \qw &  \targ & \qw & \qw & \qw\\
 &  &   &    &  \raisebox{0.5em}{\vdots} & \qwx[1]   & \raisebox{0.5em}{\vdots} & \qwx[1]     & \qwx[1]       & \raisebox{0.5em}{\vdots} &   &   & &   \\
\nghost{{q}_{3} :  } & \lstick{{q}_{s_1-1} :  } & \qw & \qw & \qw & \targ & \qw & \ctrlo{2} & \targ  &  \qw & \qw & \qw & \qw & \qw\\
\nghost{{q}_{4} :  } & \lstick{{q}_{s_1} :  } & \ctrl{1} & \ctrl{2} &  \qw &\ctrl{3} & \qw  & \qw & \ctrl{3}& \qw &  \ctrl{2} & \ctrl{1} & \qw & \qw\\
\nghost{{q}_{5} :  } & \lstick{{q}_{s_1+1} :  } & \targ & \qw &  \qw &\qw & \qw & \ctrlo{1} & \qw  & \qw &  \qw & \targ & \qw & \qw\\
\nghost{{q}_{6} :  } & \lstick{{q}_{s_1+2} :  } & \qw & \targ &  \qw &\qw & \qw & \ctrlo{1}  & \qw &  \qw & \targ & \qw & \qw & \qw\\
 &   &   &    & \raisebox{0.5em}{\vdots} & \qwx[1]   & \raisebox{0.5em}{\vdots} & \qwx[1]     &\qwx[1]  &  \raisebox{0.5em}{\vdots} &  &   & &   \\
\nghost{{q}_{7} :  } & \lstick{{q}_{s_2-1} :  } & \qw & \qw & \qw & \targ & \qw & \ctrlo{2} & \targ  & \qw & \qw & \qw & \qw & \qw\\
&   &\text{\scriptsize{other cases}}  &    &   \raisebox{0.5em}{\vdots} & &  \raisebox{0.5em}{\vdots} & \qwx[1]     &  & \raisebox{0.5em}{\vdots} &  &   & &   \\
\nghost{{q}_{8} :  } & \lstick{{q}_{s_2} :  } & \ctrl{1} & \ctrl{2} & \qw & \ctrl{3} &\qw & \qw\qwx[1] & \ctrl{3} &  \qw & \ctrl{2} & \ctrl{1} & \qw & \qw\\
\nghost{{q}_{9} :  } & \lstick{{q}_{s_2+1} :  } & \targ & \qw &  \qw &\qw & \qw & \ctrlo{1} &  \qw & \qw &\qw & \targ & \qw & \qw\\
\nghost{{q}_{10} :  } & \lstick{{q}_{s_2+2} :  } & \qw & \targ & \qw & \qw &\qw & \ctrlo{1}  & \qw &  \qw &\targ & \qw & \qw & \qw\\
     &   &   &    &   \raisebox{0.5em}{\vdots} & \qwx[1]  &\raisebox{0.5em}{\vdots} & \qwx[1]     &\qwx[1] & \raisebox{0.5em}{\vdots} &  &  &   & &   \\
\nghost{{q}_{11} :  } & \lstick{{q}_{s_m-1} :  } & \qw & \qw &  \qw &\targ &\qw & \ctrlo{1} & \targ & \qw &  \qw & \qw & \qw & \qw\\
 \text{\scriptsize{common~control~qubits}}   &   &   &    &    &    & \raisebox{0.5em}{\vdots}& \qwx[1]     & &  &  &   & &   \\
\nghost{{q}_{12} :  } & \lstick{{q}_{out} :  } & \qw & \qw &  \qw &\qw & \qw& \targ & \qw &  \qw & \qw & \qw & \qw & \qw \\
\\ }}
\]
\vspace{-1.5em}
    \caption{An example of \algref{algorithm:1}.}
    \label{fig:circuit}
\vspace{-1em}
\end{figure}

This structure of a parallelotope obtained from the basis vectors is global, since it depends on every coordinate in the vector. An example of \algref{algorithm:1} is shown in \figref{fig:circuit}. 
The CNOT gates and control signals on qubits \(q_0\) to \(q_{s_1-1}\) are based on the first basis vector obtained. Similarly, the CNOT gates and control signals on qubits \(q_{s_1}\) to \(q_{s_2-1}\) are based on the second basis vector. Additionally, control qubits corresponding to the common parts of these vectors are located on the qubits \(q_{s_m}-q_{n-1}\).

The coordinates of these points satisfy the conditions of the basis vectors. Using \algref{algorithm:1}, we can derive the corresponding circuit block, as shown in \figref{fig:circuit}.

According to \theref{theorem1}, we can find the following benefits of spatial structure like parallelotope in hypercube:
\begin{enumerate}
    \item The intermediate results are stored directly on the input qubits through the action of CNOT gates and are recovered after use, which leads to a stronger representation without the application of auxiliary qubits.
    \item This extraction of global structure is independent of order relations, ensuring that the final generated circuits are equivalent, which enhances layer exchangeability. 
\end{enumerate}
\section{Algorithm}
In this section, we design two algorithms, SSHR-H and SSHR-I, to utilize the spatial structure described above. The first algorithm is based on a greedy approach, offering a short runtime while ensuring better results. The second algorithm generalizes to the Weighted Parity Set Covering Problem~(WP-SCP) and employs an ILP solver, maximizing the use of spatial structure, while the objective function provides greater flexibility.

\subsection{SSHR-H}
In this section, we design a heuristic algorithm to efficiently select parallelotopes and maximize the utilization of spatial structure.

According to \tabref{tab:decompose}, we observe that the cost of a $k$-MCT gate increases with the number of control qubits $k$. At the same time, as stated in \theref{theorem1}, an increase in $k$ reduces the number of represented minterms, leading to a decrease in the representational capacity of the corresponding parallelotope.
This provides us with design insights for heuristic algorithms. We can argue that selecting a higher-dimensional parallelotope can effectively reduce the use of $k$-MCT gates, thereby lowering the overall circuit cost. Based on the spatial structure in $S$, we observe that when a parallelotope is selected, the corresponding minterms in its truth table result in a flip of the target qubit. Therefore, when we detect that the number of points in the minterms satisfies a certain ratio $R$ relative to the number of minterms in the intersection of a given parallelotope, we select this parallelotope and update the remaining set of minterms.

Based on the above idea, we propose a heuristic algorithm, SSHR-H, whose detailed procedure is described in \algref{alg:H}. Lines 1–2 initialize the quantum circuit and construct the initial set $A$, which contains all the minterms in the on-set. Line 3 computes the set $S$  of all candidate parallelotopes associated with the minterms in $A$. Lines 4–12 constitute the core of the algorithm. The goal is to iteratively cover all minterms in $A$, and the algorithm terminates once $A=\emptyset$ . Within each iteration, the algorithm traverses the parallelotopes in $S$ and identifies those satisfying the selection condition $|A\cap P|\geq R|P|$, where $P$  denotes a parallelotope in $S$. Once such a parallelotope is found, it may include minterms not currently in 
$A$, so $A$  must be updated accordingly by removing the minterms now covered. The selected parallelotope is then synthesized in the circuit using the procedure in \algref{algorithm:1} (lines 7–9). This greedy selection strategy enables efficient coverage of minterms while reducing circuit cost by prioritizing higher-dimensional parallelotopes whenever possible.

\setlength{\textfloatsep}{5pt}
\begin{algorithm}[htbp]
\caption{SSHR-H}
\label{alg:H}
\begin{algorithmic}[1]
    \REQUIRE The input of minterms of the $f : \{0, 1\}^n \to \{0, 1\}$.
    \ENSURE Quantum circuit of Boolean function
    \STATE $circuit \gets INIT\_CIRCUIT(n+1)$
    \STATE $A\gets \text{on-set}$
    \STATE $\text{set}~S \gets \text{parallelotopes associated with}~A$
    \WHILE{\text{A is not empty}}
        \FOR{$P \in S $}
        \IF{$|P\cap A|/|P|\geq R$}
            \STATE $qc \gets Algorithm~\ref{algorithm:1}(P)$
            \STATE $\text{Add}~qc~\text{to the}~circuit$
            \STATE $\text{update}~A$
        \ENDIF
        \ENDFOR
    \ENDWHILE

    \RETURN $circuit$
\end{algorithmic}
\end{algorithm}

\subsection{SSHR-I}\label{sec:4b}

Given the inherent complexity and richness of the spatial structure in Boolean functions, which is often difficult to exhaustively characterize or analytically prove, we reformulate the Boolean  function synthesis task as an instance of the WP-SCP. 
This formulation allows us to explore the detailed synthesis capabilities of spatial structure in a more flexible manner, enabling the use of optimization tools such as integer linear programming (ILP) to guide circuit construction.

In essence, the goal is to ensure that each minterm in the on-set is covered an odd number of times, while each minterm in the off-set is covered an even number of times. This parity constraint guarantees that the final output of the synthesized quantum circuit correctly reflects the Boolean function. The process begins with the initialization of the quantum circuit, where no operations are applied and the output qubit is set to 0, corresponding to the quantum state $\ket{0}$. This serves as the baseline state from which all subsequent transformations are applied.

Formally, let $U = \text{on-set}, A = \text{All minterms}, A-U = \text{off-set}$, where the elements of each set correspond to the minterms of a given Boolean function. We define a collection of subsets \(S = \{p_1, p_2... ,p_n\}\), where each $p_i$ represents a parallelotope, i.e., a minterms subset of minterms derived from the spatial structure of the Boolean function. Each $p_i$ is associated with a weight that reflects the cost of implementing the corresponding quantum circuit block. The objective is to find a subset \(X\subset S\) that minimizes the total cost while satisfying the following parity constraints:
\begin{itemize}
    \item Each minterm in the on-set $U$ is covered an odd number of times by the subsets in $X$.
    \item Each minterm in the off-set $A-U$ is covered an even number of times.
\end{itemize}

This problem formulation corresponds to a Weighted Parity Set Covering Problem (WP-SCP), where the objective is to synthesize a cost-efficient quantum circuit that faithfully implements the intended Boolean function semantics. By transforming the Boolean function synthesis task into a WP-SCP instance, we enable the application of powerful integer linear programming (ILP) techniques, which can efficiently compute optimal or near-optimal circuit constructions. This modeling approach not only enhances synthesis accuracy but also offers greater flexibility in incorporating various optimization objectives and hardware constraints. It is evident that ESOP is a strict subset of our proposed method in terms of representation capability. Therefore, our method is guaranteed to produce a valid solution.

The notation and problem formulation adopted in this work are summarized in \tabref{tab:notation}. These definitions have been carefully aligned with the standard set covering problem framework to ensure consistency and clarity. This alignment facilitates the effective application of existing ILP solvers and optimization techniques to address the quantum circuit synthesis problem. The corresponding optimization model is defined as follows:

\begin{align}
    &V_j = \sum_{i\in S_i} x_i\cdot e_{ij} \quad \forall i \in S,\forall j \in U. \tag{C1}\\
    &V_k = 2\cdot y_k + 1 \quad \forall k \in A. \tag{C2}\\
    &V_l = 2\cdot z_l  \quad \forall l \in A-B. \tag{C3}\\
    &\forall i \in S. \tag{C4}\\
    &\forall j \in U. \tag{C5}\\
    &\forall k \in A. \tag{C6}\\
    &\forall l \in A-B. \tag{C7}
\end{align}

\begin{table}[tbp]
\centering
\caption{Weighted Parity SCP Notation}
\label{tab:notation}
\renewcommand{\arraystretch}{1.2}
\begin{tabular}{|ll|}
\hline
\multicolumn{2}{|c|}{Input Parameters}                                                               \\ \hline
\multicolumn{1}{|l|}{S}           & Set of parallelotopes                                                  \\
\multicolumn{1}{|l|}{A}           & Set of minterms of Boolean  function                               \\
\multicolumn{1}{|l|}{U}           & on-set             \\ \hline
\multicolumn{2}{|c|}{Indices}                                                                        \\ \hline
\multicolumn{1}{|l|}{i}           & Index of subsets in S                                            \\
\multicolumn{1}{|l|}{j}           & Index in U                                         \\
\multicolumn{1}{|l|}{k}           & Index of minterm in A                                         \\
\multicolumn{1}{|l|}{l}           & Index of inputs in $A-B$(the difference between set $A$ and $B$) \\ \hline
\multicolumn{2}{|c|}{Weighted Subsets}                                                               \\ \hline
\multicolumn{1}{|l|}{$C_i$}       & The CNOT cost of subset $S_i$                                    \\
\multicolumn{1}{|l|}{$G_i$}       & The T gate cost of subset $S_i$                      \\ \hline
\multicolumn{2}{|c|}{Binary Decision Variables}                                                      \\ \hline
\multicolumn{1}{|l|}{$x_i$}       & Whether $S_i$ is selected                                        \\
\multicolumn{1}{|l|}{$e_{ij}$} & Whether minterm $U_j \in S_i$                                      \\ \hline
\multicolumn{2}{|c|}{Integer Helper Variables}                                                       \\ \hline
\multicolumn{1}{|l|}{V}           & Cover times of minterm $v$                                         \\
\multicolumn{1}{|l|}{$y_k$}       & Cover the minterm in $A$ odd times                              \\
\multicolumn{1}{|l|}{$z_l$}       & Cover the minterm in $A-B$ even times                           \\ \hline
\multicolumn{2}{|c|}{Real Decision Variables}                                                        \\ \hline
\multicolumn{1}{|l|}{$TC$}    & Total cost of quantum circuit                                    \\ \hline
\end{tabular}
\end{table}

The final optimization goal is expressed in the objective function (OBJ), which seeks to minimize the total cost associated with the gates used in the quantum circuit. 

\begin{equation}
    \min(TC). \tag{OBJ}
\end{equation}

The first group of constraints, denoted as (C1), calculates the number of times each minterm is covered by the selected parallelotopes. Constraints (C2) and (C3) enforce the required parity conditions, ensuring that minterms in the on-set are covered an odd number of times, while those in the off-set are covered an even number of times. These constraints are essential for preserving the logical correctness of the synthesized quantum circuit. Constraints (C4) through (C7) impose bounds on the indices of subsets, minterms, and decision variables to ensure that all references remain within valid domains. Together, these constraints guarantee that the optimization model is structurally well-formed, semantically valid, and computationally tractable.

The introduction of the cost function $TC$, as defined in \eqnref{eq:cost_function}, provides greater flexibility in quantum circuit synthesis. By designing different objective functions—such as minimizing the number of CNOT gates, T gates, or other cost metrics—we can tailor the synthesis process to meet various optimization goals. The effectiveness of these cost-driven strategies is evaluated through experimental results presented in later sections.

\begin{equation}\label{eq:cost_function}
    TC = \alpha \sum_{i\in S}C_i \cdot x_i + \beta \sum_{i\in S}G_i \cdot x_i. 
\end{equation}

\subsection{Synthesis via SSHR-I}
In this section, we present the complete execution flow of the SSHR-I algorithm, as outlined in \algref{algorithm:2}. Lines 1–2 are responsible for initializing the inputs and outputs of the quantum circuit, while line 3 extracts the spatial structure associated with the on-set minterms. Line 4 formulates and solves the corresponding Weighted Set Cover Problem with Parity Constraints~(WP-SCP) using an ILP solver, which yields the selected set of parallelotopes. The selected parallelotopes are then passed through \algref{algorithm:1} in lines 5–9 to generate the corresponding quantum circuit blocks, which are subsequently appended to the overall circuit. The final output is the composed quantum circuit obtained by stitching together all circuit blocks.

Notably, since the order of the selected sets is arbitrary, the execution order of the corresponding circuit blocks does not affect the functional correctness of the final circuit. This property allows SSHR-I to naturally support layer-swappable circuit synthesis, enabling greater flexibility for subsequent optimization techniques, such as depth minimization or gate reordering.

\setlength{\textfloatsep}{5pt}
\begin{algorithm}[tbp]
    \caption{SSHR-I}
    \label{algorithm:2}
    \begin{algorithmic}[1]
        \REQUIRE The input of minterms of the \(f : \{0, 1\}^n \to \{0, 1\}\).
        \ENSURE Quantum circuit of Boolean  function.
        \STATE $circuit \gets INIT\_CIRCUIT(n+1)$
        \STATE $A \gets \text{on-set}$
        \STATE $\text{set}~S \gets \text{parallelotope associated with}~A$
        \STATE $\text{select\_set} = ILP\_SOLVER(A,S)$
        \FOR{$\text{parallelotope}~P$ in $\text{select\_set}$}
            \STATE $qc \gets Algoritm ~\ref{algorithm:1}(P)$
            \STATE $\text{Add}~qc~\text{to the}~circuit$
        \ENDFOR
        \RETURN $circuit$
    \end{algorithmic}
    \end{algorithm}

\subsection{An Example}
To illustrate the execution process of our proposed algorithm more intuitively, we present a concrete example. Specifically, we aim to synthesize a Boolean  function with $ID = 0x46B9$. Its corresponding representation in the hypercube is depicted in \figref{fig:example1}. This example will serve to demonstrate the step-by-step operation of the algorithm and highlight the role of spatial structure in the synthesis process.

We begin by computing the spatial structure embedded in the on-set $U$ of the Boolean function. This spatial structure, derived from the geometric representation of the function's minterms within the hypercube, forms a collection of parallelotopes denoted as the set $S$. For instance, we observe that the minterms [0011,0111,0000,0100] constitute a 2-dimensional parallelotope (i.e., $d=2$), as they satisfy the Minkowski sum described in \eqnref{eq:p}. 
Each parallelotope in $S$ corresponds to a potential circuit block and is associated with a different synthesis cost, which defines its weight in the WP-SCP formulation.
In accordance with the parity constraints, we must ensure that each minterm in the on-set $U$ is covered an odd number of times, while each minterm in the off-set $A-U$ is covered an even number of times. These parity conditions are essential to preserving the functional correctness of the synthesized quantum circuit, as they reflect the required constructive or destructive interference of quantum states.

Using an ILP solver, we obtain the optimal subset of parallelotopes that satisfies the parity constraints with minimal cost. In this example, the selected subsets are: $S_1 = [0000,0001,0010,0011,0100,0101,0110,0111], S_2 = [0110,1110], S_3 = [0001,0010,1001,1010]$. Each subset $S_i$ corresponds to a parallelotope extracted from the spatial structure of the Boolean function. According to our synthesis framework, each of these subsets can be implemented using a dedicated quantum circuit block following the procedure described in \algref{algorithm:1}. The resulting circuit blocks are shown in \figref{fig:example2}.

By sequentially combining these blocks, we construct the complete oracle for the Boolean function $f$. Notably, since the parity set covering formulation ensures that the order of the selected subsets does not affect the overall logic, the corresponding circuit blocks are mutually exchangeable. This modularity offers enhanced flexibility and paves the way for further circuit-level optimizations, such as gate reordering.

\begin{figure}[htbp]
\centering
\subfigure[Function with ID = 0x46B9.]{
\centering
\begin{minipage}[t]{0.4\linewidth}
\centering
\includegraphics[width=\linewidth]{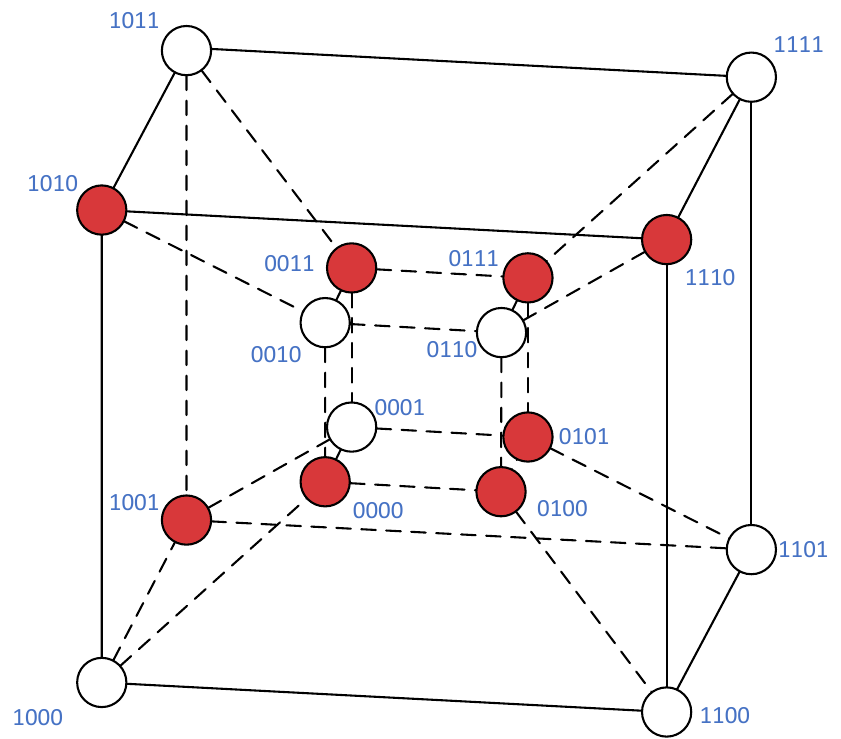}
\end{minipage}
\label{fig:example1}
}
\hspace{-5pt}
\subfigure[Corresponding quantum circuit.]{
\centering
\begin{minipage}[t]{0.5\linewidth}
\vspace{-80pt}
\label{fig:example2}
\hspace{-20pt}
\scalebox{1}{
\Qcircuit @C=1.0em @R=0.6em @!R { \\
        \nghost{{q}_{0} :  } & \lstick{{q}_{0} :  } & \qw & \qw & \qw \barrier[0em]{4} & \qw & \ctrlo{4} \barrier[0em]{4} & \qw & \qw & \qw \\
        \nghost{{q}_{1} :  } & \lstick{{q}_{1} :  } & \qw & \ctrlo{2} & \qw & \qw & \qw & \qw & \ctrl{1} & \qw \\
        \nghost{{q}_{2} :  } & \lstick{{q}_{2} :  } & \ctrl{1} & \qw & \ctrl{1} & \qw & \qw & \qw & \ctrl{1} & \qw \\
        \nghost{{q}_{3} :  } & \lstick{{q}_{3} :  } & \targ & \ctrl{1} & \targ & \qw & \qw & \qw & \ctrlo{1} & \qw \\
        \nghost{{q}_{4} :  } & \lstick{{q}_{4} :  } & \qw & \targ & \qw & \qw & \targ & \qw & \targ & \qw \\
\\ }}
\end{minipage}
}
\vspace{-1em}
\caption{An example of a quantum circuit for the oracle of a 4-bit Boolean function.}
\label{fig:example}
\end{figure}

\section{Evaluation}

\subsection{Environmental Setup}
We implement the synthesis program in Python, using Gurobi~\cite{gurobi} for the ILP solving. The experiments are conducted on a server with an Intel Xeon E5-2650 CPU and 512GB RAM.
Since the SSHR-I may be time-consuming, we have set a time limit of two minutes for each ILP run. 
If it exceeds 2 minutes, the process will be halted, and the best solution found so far will be output.
We test SSHR on all 3-bit and 4-bit Boolean functions, as well as 2000 randomly generated 5-bit and 6-bit Boolean functions.
To comprehensively validate SSHR, the number of minterms in the truth tables of these randomly generated n-bit Boolean functions is distributed between 1 and $2^{n-1}$.
We compared the synthesis results with an ESOP-based method~\cite{easy2024} and an XAG-based method~\cite{meuli2019role}.
During the synthesis process, to ensure a fair evaluation of the different algorithms, we calculate the cost of each $k$-MCT gate based on the decomposition costs outlined in \tabref{tab:decompose}.

\subsection{Results of SSHR-H}

\begin{table*}[bthp]
\centering
\vspace{-0.5em}
\caption{The number of gates used by the SSHR-H on Boolean  functions of different variables.}
\label{tab:greedy}
\begin{tabular}{|c|c|c|c|c|c|c|c|c|}
\hline
N & Algorithm    & X-gate & CNOT  & 2-MCT  & 3-MCT/Ancillary & 4-MCT & 5-MCT & 6-MCT \\ \hline
\multirow{3}{*}{3} & SSHR-H & 1100   & 560   & 220    & 128             & -     & -     & -     \\ \cline{2-9} 
  & ESOP   & 1061   & 172   & 333    & 136             & -     & -     & -     \\ \cline{2-9} 
  & XAG    & 1824   & 774   & 1190   & 579             & -     & -     & -     \\ \hline
\multirow{3}{*}{4} & SSHR-H & 2282   & 1094  & 249    & 218             & 90    & -     & -     \\ \cline{2-9} 
  & ESOP   & 2280   & 151   & 364    & 388             & 128   & -     & -     \\ \cline{2-9} 
  & XAG    & 4203   & 1620  & 2937   & 1466            & -     & -     & -     \\ \hline
\multirow{3}{*}{5} & SSHR-H & 40898  & 22913 & 1624   & 3913            & 2933  & 486   & -     \\ \cline{2-9} 
  & ESOP   & 43434  & 443   & 3054   & 6693            & 4861  & 1118  &  -    \\ \cline{2-9} 
  & XAG    & 77017  & 28338 & 64619  & 32292           & -     & -     & -     \\ \hline
\multirow{3}{*}{6} & SSHR-H & 89626  & 53684 & 915    & 3641            & 6798  & 3331  & 503   \\ \cline{2-9} 
  & ESOP   & 102740 & 303   & 2701   & 8543            & 11555 & 6997  & 1517  \\ \cline{2-9} 
  & XAG    & 151132 & 59232 & 144197 & 72085           & -     & -     & -     \\ \hline
\end{tabular}
\end{table*}
To control the selection criteria of candidate parallelotopes during circuit synthesis, we set the intersection threshold parameter $R=\frac{3}{4}$. That is, a parallelotope $P$ is selected only if the number of its minterms that intersect with the current working set $A$ over $\frac{3}{4}\cdot |P|$. This threshold balances two competing goals: maximizing the efficiency of spatial representation while avoiding unnecessary overlap and redundant computation. The rationale behind this choice lies in the scale of the target circuits. When $R$ is set too high (e.g., close to 1), a parallelotope must nearly be a subset of $A$ to be eligible for selection, which severely limits the algorithm's ability to exploit larger and higher-dimensional parallelotopes that may only partially overlap with $A$. As a result, potentially efficient structures would be ignored. Conversely, setting $R$ too low (e.g., below 0.5) results in the frequent selection of parallelotopes with overlap, leading to an increase in the number of iterations. This causes excessive reuse of overlapping spatial structures and generates redundant circuit components, reducing overall synthesis efficiency. Based on this trade-off, we empirically chose $R=\frac{3}{4}$ as a suitable compromise for small-scale Boolean  functions. The validity of the heuristic threshold is demonstrated in \tabref{tab:greedy}, where a dash (`-') indicates that certain gates do not appear for the given number of input variables and the column name: 3-MCT/Ancillary denotes Ancillary when the algorithm is XAG and 3-MCT for the rest. We evaluate the performance of the heuristic algorithm across various benchmark Boolean  functions with different numbers of input variables to assess its effectiveness.

We begin by analyzing the comparison between our proposed SSHR-H algorithm and  ESOP method. It is evident that SSHR-H results in significantly more CNOT gates than ESOP. This discrepancy can be attributed to two primary reasons. First, the ESOP algorithm directly targets the output qubit and applies CNOT gates only when the on-set entries in the truth table exhibit a high degree of correlation with specific input variables. In contrast, SSHR-H extensively employs CNOT gates to store and propagate intermediate results throughout the circuit, leading to a substantial increase in the total number of CNOT operations. However, this trade-off is compensated by SSHR-H's superior handling of MCT (multi-controlled Toffoli) gates. By leveraging intermediate computations stored via CNOTs, SSHR-H is able to reduce the number of control qubits required for MCT gates, which in turn leads to lower-cost circuit blocks. Since the cost of implementing MCT gates grows exponentially with the number of control qubits, reducing this number significantly improves overall synthesis efficiency. As a result, despite the increased number of CNOT gates, SSHR-H achieves better performance in terms of total quantum cost.

Next, we consider the comparison with XAG, which utilizes logic networks composed of $\{\oplus,\wedge,1\}$. XAG achieves its functionality by storing intermediate values in auxiliary qubits via Toffoli gates. This results in a higher demand for ancillary qubits compared to SSHR-H. Notably, SSHR-H is designed to avoid the use of ancillary qubits during synthesis—ancilla are introduced only when necessary during the decomposition of high-control MCT gates. This feature makes SSHR-H particularly suitable for near-term quantum architectures, where the availability of clean ancilla is often limited.

Furthermore, our heuristic algorithm is scalable and can be applied to Boolean  functions with a larger number of variables. To evaluate its performance, we conducted experiments for functions with $n=7,8$. The resulting total circuit costs were then compared against those produced by the XAG. A detailed summary of the comparative results is presented in \tabref{tab:789}, where a dash (`-') indicates that ESOP is unable to generate circuits for $n > 6$.

These results demonstrate that our algorithm maintains competitive performance as the function complexity increases, particularly in terms of reducing the number of CNOT gates and T gates while preserving resource efficiency.

\begin{table*}[]
\centering
\vspace{-0.5em}
\caption{Comparison of the number of gates used by SSHR-H and other algorithms.}
\label{tab:789}
\begin{tabular}{|c|c|c|c|c|c|c|c|c|c|c|c|}
\hline
 & \multicolumn{3}{c|}{\textbf{SSHR-H}}  & \multicolumn{4}{c|}{\textbf{ESOP}}    & \multicolumn{4}{c|}{\textbf{XAG}} \\ \hline
num\_vars & \multicolumn{1}{c|}{T-count}       & \multicolumn{1}{c|}{CNOT}   & Ancillary & \multicolumn{1}{c|}{T-count}      & \multicolumn{1}{c|}{CNOT}   & \multicolumn{1}{c|}{Ancillary} & CNOT gain & \multicolumn{1}{c|}{T-count}       & \multicolumn{1}{c|}{CNOT}    & \multicolumn{1}{c|}{Ancillary} & CNOT gain \\ \hline
3         & \multicolumn{1}{c|}{3588}    & \multicolumn{1}{c|}{3672}   & 128       & \multicolumn{1}{c|}{4507}   & \multicolumn{1}{c|}{4074}   & \multicolumn{1}{c|}{136}       & 9.86\%    & \multicolumn{1}{c|}{4760}    & \multicolumn{1}{c|}{7914}    & \multicolumn{1}{c|}{579}       & 53.60\%   \\ \hline
4         & \multicolumn{1}{c|}{7391}    & \multicolumn{1}{c|}{6540}   & 308       & \multicolumn{1}{c|}{11828}  & \multicolumn{1}{c|}{9047}   & \multicolumn{1}{c|}{516}       & 27.71\%   & \multicolumn{1}{c|}{11692}   & \multicolumn{1}{c|}{19148}   & \multicolumn{1}{c|}{1459}      & 65.85\%   \\ \hline
5         & \multicolumn{1}{c|}{159920}  & \multicolumn{1}{c|}{123573} & 7818      & \multicolumn{1}{c|}{280906} & \multicolumn{1}{c|}{176731} & \multicolumn{1}{c|}{13790}     & 30.08\%   & \multicolumn{1}{c|}{258420}  & \multicolumn{1}{c|}{415966}  & \multicolumn{1}{c|}{32285}     & 70.29\%   \\ \hline
6         & \multicolumn{1}{c|}{354525}  & \multicolumn{1}{c|}{233816} & 18107     & \multicolumn{1}{c|}{717499} & \multicolumn{1}{c|}{376925} & \multicolumn{1}{c|}{37126}     & 37.97\%   & \multicolumn{1}{c|}{576420}  & \multicolumn{1}{c|}{923824}  & \multicolumn{1}{c|}{72039}     & 74.69\%   \\ \hline
7         & \multicolumn{1}{c|}{541216}  & \multicolumn{1}{c|}{317756} & 29883     & \multicolumn{1}{c|}{-}      & \multicolumn{1}{c|}{-}      & \multicolumn{1}{c|}{-}         & -         & \multicolumn{1}{c|}{752032}  & \multicolumn{1}{c|}{1198414} & \multicolumn{1}{c|}{94004}     & 73.49\%   \\ \hline
8         & \multicolumn{1}{c|}{1081934} & \multicolumn{1}{c|}{632184} & 59053     & \multicolumn{1}{c|}{-}      & \multicolumn{1}{c|}{-}      & \multicolumn{1}{c|}{-}         & -         & \multicolumn{1}{c|}{1518272} & \multicolumn{1}{c|}{2419518} & \multicolumn{1}{c|}{189784}    & 73.87\%   \\ \hline
\end{tabular}
\end{table*}

\subsection{Results of SSHR-I}
\subsubsection{Results of CNOTs}
We first set the objective function to primarily optimize the number of CNOT gates and tested the performance on Boolean  functions with varying numbers of variables. \tabref{tab:t1} shows the number of T-counts, CNOT gates, and auxiliary qubits for different algorithms across variable counts ranging from 3 to 6, along with the percentage of CNOT reductions achieved by SSHR-I compared to other algorithms. As shown in the table, SSHR-I outperforms both the ESOP and XAG methods in terms of T-counts, CNOT gates, and auxiliary qubits in all test cases. Specifically, for 3, 4, 5, and 6 qubits, the average CNOT reductions are 21\%, 48\%, 56\%, and 54\%, respectively, compared to ESOP, and 59\%, 76\%, 81\%, and 81\%, respectively, compared to XAG for the same number of qubits. Additionally, it is evident that the reduction in the number of CNOT gates corresponds with a reduction in other gate types, demonstrating that our algorithm does not sacrifice the performance of other gates to achieve better circuit efficiency. As the number of variables $n$ increases, the gains become more significant; however, these improvements come at the cost of longer runtime for the ILP solver.

\begin{table*}[htbp]
\centering
\vspace{-0.5em}
\caption{The number of gates used by each algorithm when the objective of SSHR-I is set to minimize the number of CNOT gates.}
\label{tab:t1}
\begin{tabular}{|c|c|c|c|c|c|c|c|c|c|c|c|}
\hline
& \multicolumn{3}{c|}{\textbf{SSHR-I}} & \multicolumn{4}{c|}{\textbf{ESOP}} & \multicolumn{4}{c|}{\textbf{XAG}} \\ 
\hline
num\_vars & \multicolumn{1}{c|}{T-count} & \multicolumn{1}{c|}{CNOT}   & Ancillary & \multicolumn{1}{c|}{T-count} & \multicolumn{1}{c|}{CNOT}   & \multicolumn{1}{c|}{Ancillary} & CNOT gain & \multicolumn{1}{c|}{T-count} & \multicolumn{1}{c|}{CNOT}   & \multicolumn{1}{c|}{Ancillary} & CNOT gain \\ \hline
3         & \multicolumn{1}{c|}{3280}        & \multicolumn{1}{c|}{3232}   & 128       & \multicolumn{1}{c|}{4507}        & \multicolumn{1}{c|}{4074}   & \multicolumn{1}{c|}{136}       & 20.67\%   & \multicolumn{1}{c|}{4760}       & \multicolumn{1}{c|}{7914}   & \multicolumn{1}{c|}{579}       & 59.16\%   \\ \hline
4         & \multicolumn{1}{c|}{6028}       & \multicolumn{1}{c|}{4696}   & 212       & \multicolumn{1}{c|}{11828}       & \multicolumn{1}{c|}{9047}   & \multicolumn{1}{c|}{516}       & 48.09\%   & \multicolumn{1}{c|}{11692}       & \multicolumn{1}{c|}{19148}  & \multicolumn{1}{c|}{1459}      & 75.48\%   \\ \hline
5         & \multicolumn{1}{c|}{134656}      & \multicolumn{1}{c|}{78562}  & 5493      & \multicolumn{1}{c|}{280906}      & \multicolumn{1}{c|}{176731} & \multicolumn{1}{c|}{13790}     & 55.55\%   & \multicolumn{1}{c|}{258420}      & \multicolumn{1}{c|}{415966} & \multicolumn{1}{c|}{32285}     & 81.11\%   \\ \hline
6         & \multicolumn{1}{c|}{298267}      & \multicolumn{1}{c|}{171964} & 13191     & \multicolumn{1}{c|}{717499}     & \multicolumn{1}{c|}{376925} & \multicolumn{1}{c|}{37126}     & 54.38\%   & \multicolumn{1}{c|}{576420}     & \multicolumn{1}{c|}{923824} & \multicolumn{1}{c|}{72039}     & 81.39\%   \\ \hline
\end{tabular}
\end{table*}

\begin{table}[htbp]
\centering
\vspace{-0.5em}
\caption{Number of gates used when SSHR-I's target is T-count.}
\label{tab:t2}
\begin{tabular}{|c|ccc|}
\hline
\textbf{} & \textbf{}                    & \textbf{SSHR-I}               & \textbf{} \\ \hline
num\_vars & \multicolumn{1}{c|}{T-count} & \multicolumn{1}{c|}{CNOT}   & Ancillary \\ \hline
3         & \multicolumn{1}{c|}{2832}    & \multicolumn{1}{c|}{3579}   & 128       \\ \hline
4         & \multicolumn{1}{c|}{5742}    & \multicolumn{1}{c|}{5838}   & 208       \\ \hline
5         & \multicolumn{1}{c|}{110183}  & \multicolumn{1}{c|}{114320} & 5405      \\ \hline
6         & \multicolumn{1}{c|}{293765}  & \multicolumn{1}{c|}{262673} & 14179     \\ \hline
\end{tabular}
\end{table}

\subsubsection{Results of T gates}
As mentioned in section \secref{sec:4b}, we can flexibly set different optimization objectives. Since the other algorithms do not offer this capability, their experimental results are identical to those presented in \tabref{tab:t1}, and therefore, we will not repeat them. We optimized the T-count and tested the effect of varying numbers of variables.
\tabref{tab:t2} presents the number of T gates, CNOT gates, and auxiliary qubits for each method across the 3- to 6-variable test cases. As shown in the table, SSHR-I outperforms both the ESOP and XAG methods in terms of T-gates, CNOT gates, and ancillary qubits in all test cases. Specifically, the average T-gate reduction rates for the ESOP method are 37\%, 51\%, 61\%, and 59\% for 3, 4, 5, and 6 qubits, respectively, while the corresponding optimization rates for the XAG method are 41\%, 51\%, 57\%, and 49\% for the same qubit counts. Moreover, both CNOT and ancillary qubits also show significant reductions.

When comparing these results to those from experiments where the primary objective was CNOT minimization, we observe that the choice of objective function in SSHR influences the selection of the parallelotopes. In the T-count minimization case, there is a noticeable redundancy in CNOT gates compared to the CNOT-aim case, while the number of auxiliary qubits remains relatively stable. These results suggest that a flexible objective function can lead to targeted improvements across different metrics.
Notably, it can be seen that the number of CNOT gates increases compared to the results when our objective function is set to T-counts. 
This establishes a foundation for balancing the costs of different gates in circuit synthesis.

\subsection{Discussions}
In this section, we analyze the fundamental advantages of the SSHR strategy in terms of representation capability and circuit synthesis efficiency. Compared to the ESOP approach, SSHR demonstrates a significantly broader representation capacity. Specifically, the representation space of ESOP is a strict subset of that of SSHR, which allows SSHR to outperform ESOP in a variety of cases. While ESOP is limited to representing functions through disjoint product terms, SSHR leverages spatial structures as parallelotopes in the Boolean hypercube, enabling a more expressive and compact representation. A key strength of SSHR lies in its ability to utilize and reuse intermediate results during the synthesis process. This capability not only reduces redundant computations but also substantially enlarges the overall expressiveness of the circuit design space. Quantitatively, the representation capacity of ESOP for an $n$-variable Boolean  function can be estimated by the expression:\(\sum_{k=1}^n \binom{n}{k} 2^k\), which reflects the total number of distinct product terms. In contrast, the representation capacity of SSHR is determined by the number and dimensionality of extractable parallelotopes, the results are summarized and compared in \tabref{tab:expand}. As the number of variables increases, the gap between the two methods widens significantly.

We now analyze the underlying factors contributing to the advantages of SSHR over XAG. XAG, which employs heuristic strategies for logic synthesis, is particularly well-suited for large-scale circuits due to its focus on local optimizations and structural rewrites. However, this local perspective inherently limits its ability to exploit global patterns and structural structure embedded in the Boolean  function. In contrast, SSHR is designed with an emphasis on optimizing small-scale circuits, enabling it to leverage the global spatial structure of the Boolean  function, specifically the parallelotope structures described in \secref{spatial}. By analyzing the function's on-set as a whole and extracting higher-dimensional spatial regularities, SSHR achieves a deeper level of insight into the Boolean logic. This global awareness allows SSHR to make more informed decisions during synthesis, often resulting in more efficient quantum circuits in terms of gate count and overall cost.

\begin{table}[h]
\centering
\vspace{-0.5em}
\caption{Comparison of optimization space between SSHR and ESOP.}
\label{tab:expand}
\begin{tabular}{|c|c|c|c|}
\hline
\(n\) & ESOP & SSHR & Expansion factor \\ \hline
3 & 27                & 49         & 1.8           \\ \hline
4 & 81                & 257        & 3.2            \\ \hline
5 & 243               & 1539       & 6.3            \\ \hline
6 & 729               & 10299      & 14.1           \\ \hline
7 & 2187              & 75905      & 32.7          \\ \hline
8 & 6561              & 609441     & 92.9          \\ \hline
\end{tabular}
\end{table}
\section{Conclusion and Future Work}

In this paper, we proposed the Spatial Structure-based Hypercube Reduction~(SSHR), an effective approach for quantum Boolean  function synthesis.
We introduce two variants of SSHR: a heuristic algorithm, SSHR-H, and an ILP solver-based method, SSHR-I, to leverage this structure. 
SSHR enables the extraction of global information embeded in the parallelotopes of the Boolean  function and allows the storage of intermediate results without the need for auxiliary qubits.
Experimental results demonstrate that our method significantly outperforms existing techniques such as ESOP and XAG in key circuit metrics across all tested small-scale Boolean  functions. Looking forward, we aim to extend SSHR to support larger-scale quantum circuit synthesis, enhancing its scalability and generalizability. Additionally, we plan to explore its application to multi-output Boolean  function synthesis, further enriching its potential use cases in practical quantum computing systems.
\section*{Acknowledgments}
This research is supported by Innovation Program for Quantum Science and Technology (Grants No. 2021ZD0302901), National Natural Science Foundation of China (Grants No. 62002333) and  the Guangdong Provincial Quantum Science Strategic Initiative (Grants No.GDZX2403001).

\clearpage

\balance
\bibliographystyle{IEEEtran}
\bibliography{ref}



\end{document}